\documentclass{sig-alternate}

\usepackage[utf8]{inputenc}

\usepackage{mathrsfs}

\usepackage{amsthm}
\usepackage{stmaryrd}
 
\usepackage{multirow}

\usepackage[
  citestyle=authoryear-comp,
  bibstyle=authoryear,
  backend=bibtex,
  isbn=false,
  opcittracker=true,
  autocite=inline,
  doi=true,url=false,
  maxcitenames=3,
  maxbibnames=10,
  firstinits=true
]{biblatex}
\bibliography{biblio}

\DeclareNameAlias{sortname}{first-last}

\usepackage{tikz}
\usetikzlibrary{calc, matrix, backgrounds,shapes.geometric}

\usepackage{xspace}

\usepackage{float}
\usepackage[noend]{algpseudocode}
\usepackage[ruled]{caption}      %

\captionsetup{width=.95\linewidth,labelfont=it,labelsep=period}
\floatstyle{ruled}
\newfloat{algo}{t}{lop}
\floatname{algo}{Algorithm}

\floatstyle{plaintop}
\restylefloat{figure}

\usepackage{xcolor}

\usepackage[pdftex,hypertexnames=false]{hyperref}

\definecolor{darkblue}{rgb}{0,0,0.5}
\definecolor{cerule}{RGB}{53,122,183}
\definecolor{cardinal}{RGB}{184,32,16}

\hypersetup{
    colorlinks,
    linkcolor={black},
    citecolor={cerule},
    urlcolor={cardinal}
}

\usepackage{microtype}

\setcounter{topnumber}{1}

\clubpenalty=10000
\widowpenalty = 10000

\newtheoremstyle{plain}
  {\topsep}   %
  {\topsep}   %
  {\itshape}  %
  {0pt}       %
  {\scshape} %
  {.}         %
  {5pt plus 1pt minus 1pt} %
  {}          %

\newtheoremstyle{definition}
  {\topsep}   %
  {\topsep}   %
  {}  %
  {0pt}       %
  {\scshape} %
  {.}         %
  {5pt plus 1pt minus 1pt} %
  {}          %

\newtheorem{theo}{Theorem}
\newtheorem{lem}[theo]{Lemma}
\newtheorem{prop}[theo]{Proposition}

\theoremstyle{definition}
\newtheorem*{defn}{Definition}

\theoremstyle{remark}

\newcommand{\noopsort}[1]{}

\newcommand{\Z}{\mathbb{Z}}
\newcommand{\Zp}{\mathbb{Z}_p}
\newcommand{\Q}{\mathbb{Q}}
\newcommand{\Qp}{\Q_p}
\newcommand{\Fp}{\mathbb{F}_p}

\newcommand\cO{\mathcal{O}}

\newcommand\cY{\mathcal{Y}}

\newcommand{\ud}{\mathrm{d}}

\newcommand\Zpt{\mathbb{Z}_p \llbracket t \rrbracket}
\newcommand\Qpt{\Qp\llbracket t \rrbracket}

\newcommand\Opt{\cO_K \llbracket t \rrbracket}
\newcommand\Kpt{K\llbracket t \rrbracket}

\newcommand\padic{$p$-adic\xspace}
\newcommand\abs[1]{\left|#1\right|}

\def\eqdef{\stackrel{\text{def}}{=}}
\def\leq{\leqslant}
\def\geq{\geqslant}
\def\st{\ \middle|\ }

\def\smallint{\begingroup\textstyle \int\endgroup}

\begin{document}

\CopyrightYear{2016}
\setcopyright{acmcopyright}
\conferenceinfo{ISSAC '16,}{July 19--22, 2016, Waterloo, ON, Canada}
\isbn{978-1-4503-4380-0/16/07}\acmPrice{\$15.00}
\doi{http://dx.doi.org/10.1145/2930889.2930912}

\title{On {\ttlit p}-adic Differential Equations\\ with Separation of Variables}

\numberofauthors{2}
\author{\alignauthor
  Pierre Lairez\titlenote{Partially funded by the DFG grant \mbox{BU~1371/2-2}.}\\
       \affaddr{Technische Universität}\\
       \affaddr{Berlin, Germany}
       \email{pierre@lairez.fr}
\alignauthor
Tristan Vaccon\\
       \affaddr{JSPS-Rikkyo University} \\
       \affaddr{Tokyo, Japan}
       \email{vaccon@rikkyo.ac.jp}
}

\maketitle

\begin{abstract} 
Several algorithms in computer algebra involve the computation of a power series solution of a given ordinary differential equation. Over finite fields, the problem is often lifted in an approximate $p$-adic setting to be well-posed. This raises precision concerns: how much precision do we need on the input to compute the output accurately? In the case of ordinary differential equations with separation of variables, we make use of the recent technique of differential precision to obtain optimal bounds on the stability of the Newton iteration.  The results apply, for example, to algorithms for manipulating algebraic numbers over finite fields, for computing isogenies between elliptic curves or for deterministically finding roots of polynomials in finite fields.  The new bounds lead to significant speedups in practice.
\end{abstract}

\begin{CCSXML}
<ccs2012>
<concept>
<concept_id>10002950.10003714.10003727.10003728</concept_id>
<concept_desc>Mathematics of computing~Ordinary differential equations</concept_desc>
<concept_significance>500</concept_significance>
</concept>
<concept>
<concept_id>10010147.10010148.10010149.10010150</concept_id>
<concept_desc>Computing methodologies~Algebraic algorithms</concept_desc>
<concept_significance>500</concept_significance>
</concept>
<concept>
<concept_id>10010147.10010148.10010149.10010156</concept_id>
<concept_desc>Computing methodologies~Number theory algorithms</concept_desc>
<concept_significance>300</concept_significance>
</concept>
</ccs2012>
\end{CCSXML}

\ccsdesc[500]{Computing methodologies~Algebraic algorithms}
\ccsdesc[300]{Computing methodologies~Number theory algorithms}
\ccsdesc[500]{Mathematics of computing~Ordinary differential equations}

\printccsdesc

\keywords{Ordinary differential equation; $p$-adic numbers; Newton iteration; numerical stability; differential precision}

\section{Introduction}
\label{section:presentation-equa-diff}

We study, in a~$p$-adic context, the loss of precision occuring during the
computation of a power series solution of a certain class of differential
equations. We use the method of differential precision that relies on a
first-order analysis.

\subsection{The {\subsecit p}-adic context}

Let~$\Zp$ be the ring of~$p$-adic integer, for a given prime~$p$, and~$\Qp$ its
field of fractions. The~\padic valuation on~$\Qp$ is denoted~$v_p$, and
the~\padic norm is defined by~$|a| = p^{-v_p(a)}$, for~$a\in\Qp$.
For~$a\in\Qp$ and~$\varepsilon$ a positive real number,
let~$a+O(\varepsilon)$ denote the set of all~$b\in\Qp$ such that~$|a-b|\leq
\varepsilon$. %
For~$a,b\in\Zp$, we have~$a = b + O(\varepsilon)$ if and only if~$a\equiv b
\pmod{p^{-\lfloor \log_p \varepsilon \rfloor}}$.

A computer can handle \padic numbers given with bounded precision: a \padic
number~$a$ is approximately represented by a rational number~$a'$ and a
radius~$\varepsilon$ such that~$a \in a' + O(\varepsilon)$. This leads to a 
\emph{ball arithmetic} over~$\Qp$.
The ultrametric nature of~$\Qp$ makes ball arithmetic particularly convenient
since errors do not propagate when adding two numbers:
\[ \left(a+O(\varepsilon)\right) + \left( b + O(\eta) \right) = a + b + O(\max(\varepsilon, \eta)), \]
or multiplying them:
\[  \left(a+O(\varepsilon)\right) \left( b + O(\eta) \right) = ab + O(\max(\eta\abs a, \varepsilon\abs b)), \]
or dividing them:
\[ \frac{a+O(\varepsilon)}{b+O(\eta)} = \frac{a}{b} + O(\max(\eta \abs a \abs{b}^{-2}, \varepsilon \abs{b}^{-1})). \]
These formulae are optimal: the equalities are set equalities and not only
left-to-right inclusions.  When considering an algorithm performing additions,
multiplications and divisions over~\padic numbers, it is possible to track the
precision during all the intermediate steps. Thus, it is possible to run the
algorithm on inputs given approximately as balls, and to return the result as a
ball with the guarantee that whatever the \emph{exact} values of the input are,
the \emph{exact} result lies in the ball returned.  However, even if for every 
single operation the formulae above give the optimal precision of the result,
the optimality does not compose.  This is the well-known \emph{dependency
problem}. It is a major obstacle to the application of ball arithmetic
over~$\Qp$, in the same way as it constricts interval arithmetic over~$\mathbb R$.

For example, let us consider the computation of the determinant of a matrix
with $p$-adic integer coefficients, given at precision~$\varepsilon$.  Since
the determinant is an integral polynomial function of the coefficients, the
determinant is also known at precision at least~$\varepsilon$.  However, if it
is computed through a Gaussian elimination, and if at some point of the
computation, one of the pivots has a positive valuation, then basic precision
tracking will indicate that the result is only correct at precision less
than~$\varepsilon$.  The \emph{intrinsic} loss of precision is null, or even
negative \parencite{CarRoeVac15}, while the \emph{algorithmic} loss, that
depend on the algorithm, may be positive.

\Textcite{CarRoeVac14} have shown, in a $p$-adic setting, how to use first
order analysis to obtain rigorous and optimal precision bounds on both the intrinsic and the algorithmic loss of precision.
The method relies on
the following fact:

\begin{lem}[{\Cite{CarRoeVac14}}] \label{lem:prec}
Let $\varphi : \Qp^n \rightarrow \Qp^m$ be a differentiable
map and let~$x\in\Qp^n$. If the differential~$\ud_x\varphi$ at~$x$
is surjective, then
$ \varphi(x+B)=\varphi(x)+\ud_x \varphi(B)$ for any zero-centered small enough ball $B$.
\end{lem}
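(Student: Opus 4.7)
Write $L = \ud_x\varphi$ so that by differentiability $\varphi(x+h) = \varphi(x) + L(h) + R(h)$ with $\|R(h)\|/\|h\| \to 0$ as $h \to 0$. The plan is to prove the two inclusions in the claimed equality separately. The central structural input is that the surjective $\Qp$-linear map $L$ admits a $\Qp$-linear right inverse $M$, and that $L$ sends any zero-centered ball $B = B(0,\varepsilon)$ to an additively closed subset of $\Qp^m$ containing a ball $B(0, c\varepsilon)$ for some $c > 0$ depending only on $L$.

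For the forward inclusion $\varphi(x+B) \subseteq \varphi(x) + L(B)$, I would shrink $\varepsilon$ so that $\|R(h)\| \leq c\varepsilon$ for every $h \in B$; this is possible by differentiability at $x$. Then $R(h) \in L(B)$, and since $L(h) \in L(B)$ and $L(B)$ is closed under addition, $\varphi(x+h) - \varphi(x) = L(h) + R(h) \in L(B)$. This direction uses only the first-order expansion and the ultrametric property.

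For the reverse inclusion $\varphi(x) + L(B) \subseteq \varphi(x+B)$, given $y \in L(B)$ the task is to find $h \in B$ with $L(h) + R(h) = y$. I would pick some $h_0 \in B$ with $L(h_0) = y$ and correct it by a $p$-adic Newton iteration $h_{k+1} = h_k - M(\varphi(x+h_k) - \varphi(x) - y)$. For $\varepsilon$ small enough, the iterates remain in $B$ and the residual shrinks quadratically (thanks to $\ud_0 R = 0$ and the strong triangle inequality), yielding a limit $h \in B$ with $\varphi(x+h) = \varphi(x) + y$.

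The main obstacle is making that last step rigorous: differentiability of $\varphi$ at the single point $x$ does not immediately provide a uniform Lipschitz control on $R$ over a neighborhood, which is what is needed in order to bound $\varphi(x + h_{k+1}) - \varphi(x + h_k) - L(h_{k+1} - h_k)$ and conclude that the Newton step is a contraction. In the applications of the paper $\varphi$ is given by convergent $p$-adic power series and this mean-value-type estimate is automatic; in general one appeals to a $p$-adic inverse function theorem, whose proof is itself streamlined by the ultrametric inequality when compared with the archimedean analogue.
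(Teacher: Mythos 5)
The paper offers no proof of this lemma --- it is imported verbatim from \textcite{CarRoeVac14} --- so the only internal points of comparison are the surrounding discussion (``what \emph{small enough} is can be expressed in terms of the norms of the higher differentials\dots in the case that $\varphi$ is analytic'') and the way the result is actually invoked in Proposition~\ref{prop:CRV}, namely through bounds on $\|\frac{1}{k!}\ud^k_g\cY\|$ for all $k\geq 2$. Your architecture is the standard and correct one for the cited theorem: split into two inclusions; observe that a zero-centered ball $B$ is an additive subgroup, so $L(B)$ is additively closed and, via a bounded right inverse $M$ of the surjective linear map $L$, contains a ball $B(0,c\varepsilon)$; get the easy inclusion from the first-order expansion alone; and get the hard inclusion from an ultrametric Newton iteration $h_{k+1}=h_k-M(\varphi(x+h_k)-\varphi(x)-y)$. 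Your forward inclusion is complete and correct as written.

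The gap you flag in the reverse inclusion, however, is not a loose end that a cleverer argument could tie up: under the hypotheses as literally stated the conclusion is \emph{false}, so the missing uniform control is genuinely a missing hypothesis. Differentiability at $x$ (even everywhere, with constant derivative) says nothing about $\varphi(x+h')-\varphi(x+h)-L(h'-h)$ for $h\neq h'$, and in $\Qp$ this is fatal because balls are clopen. Concretely, let $\varphi\colon\Qp\to\Qp$ be the identity map modified by adding the locally constant function equal to $p^{2k}$ on the ball $B(p^k,p^{-3k})$ for each $k\geq1$ and to $0$ elsewhere: then $\varphi$ is differentiable everywhere with derivative $1$, the remainder at $0$ satisfies $\|R(h)\|\leq\|h\|^2$, and yet no point $p^k$ lies in the image of $\varphi$, so $\varphi(B)\neq\varphi(0)+\ud_0\varphi(B)$ for \emph{every} zero-centered ball $B$. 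The hypothesis actually used by \textcite{CarRoeVac14}, and by this paper in Proposition~\ref{prop:CRV}, is analyticity (or at least strict differentiability): a uniform estimate $\|\varphi(x+h')-\varphi(x+h)-L(h'-h)\|\leq C\max(\|h\|,\|h'\|)\|h'-h\|$ on a neighborhood of $x$, which is exactly what makes your Newton step a contraction and what the bounds on the higher differentials deliver in the application. So your proof is the right proof of the right theorem; to close it you must strengthen ``differentiable'' to ``strictly differentiable'' or ``locally analytic'', which is how the cited reference states the result and how the lemma should be read here.
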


In other words, if the precision on the input~$x$ is~$B$, then the precision on
the output~$\varphi(x)$ is~$\ud_x \varphi(B)$; this is the inclusion
of~$\varphi(x+B)$ in~$\varphi(x)+\ud_x \varphi(B)$.  And conversely, every
pertubation of~$\varphi(x)$ up to~$\ud_x \varphi(B)$ comes from a
pertubation from the input~$x$ up to~$B$; this is the converse inclusion
of~$\varphi(x)+\ud_x \varphi(B)$ in~$\varphi(x+B)$. 
What \emph{small enough} is can be
expressed in terms of the norms of the higher differentials of~$\varphi$
at~$x$, in the case that $\varphi$ is analytic at $x$.

\subsection{Main result}

We study here the computation of a power series with $p$-adic coefficients
solution of a given first-order ordinary differential equation with separation of
variables.  Let~$g$ and~$h$ be power series in~$\Zpt$ such
that~$h(0)=1$ and~$g(0)\neq0$. We consider the following differential equation:
\begin{equation} \label{MainDiffEqu} \tag{E}
  y' =g \cdot h(y), \quad y(0) = 0.
\end{equation}
It has a unique solution~$y\in \Qpt$. We make the assumption that this solution
has integer coefficients, that is~$y\in \Zpt$.
Given the~$n$ first coefficients of~$g$ and~$h$, at bounded precision,
how far can we approximate~$y$? At which computational cost?

The case of a general initial condition~$y(0)=c$ reduces to the equation above
by changing~$h(t)$ in~$h(t+c)$.  In particular, the linear equation~$y' =
g\cdot y$, with~$y(0)=1$ can be written~$z' = g\cdot(1+z)$, with~$z(0)=0$,
thanks to the transformation~$z = y-1$.  Nevertheless, the initial condition~$y(0)=0$
ensures that the composition~$h(y)$ is well defined when~$h$ is a general power
series.

\begin{defn}
For~$\varepsilon$ a positive real number and~$n$ a positive integer, an
\emph{approximation modulo~$(p^\kappa, t^n)$} of a power series~$f = \sum_{i\geq0} a_i t^i \in \Zpt$
is a power series~$\bar f= \sum_{i\geq0} b_i t^i\in \Zpt$ such that~$a_i = b_i \pmod{p^\kappa}$
for all~$i < n$.
\end{defn}

The complexity of computing~$y$ depends on the complexity of the power series
multiplication and the composition~$h(f)$ for a general power series~$f$.
Let~$M_\Z(p^\lambda, n)$ the number of bit operations required to compute the
product of two polynomials of degree~$n$ with coefficients
in~$\Z/p^{\lambda}\Z$.  Let~$C_h(p^\lambda, n)$, the number of bit operations
needed to compute an approximation modulo~$(p^\lambda, t^n)$ of~$h(f)$ given an
approximation modulo~$(p^\lambda, t^n)$ of a power series~$f \in \Zpt$.  We
assume that~$C_h(p^\lambda, 2n) \geq 2 C_h(p^\lambda, n)$.  In the general
case, \textcite{KedUma11} proved the quasi-optimal bound~$C_h(p^\lambda, n) =
O\big( (n \lambda \log p)^{1+o(1)}\big)$.  In practice, $h$ is given as a
procedure that computes the composition~$h(f)$ modulo~$(p^\lambda, t^n)$ for
any~$f \in \Zpt$ given modulo~$(p^\lambda, t^n)$.  This
composition is easy to compute in most applications: $h$ is often a rational function of small degree or a
radical of such a rational function so that~$C_h(p^\lambda, n) =
O(M_\Z(p^\lambda,n))$.  We may regard~$h$ as known with infinite precision, but
the computations depends only on a suitable approximation of~$h$.

Our main result is then the following:
\begin{theo} \label{MainTheorem}
  Let~$n > 0$, $\kappa > 0$ (or~$\kappa > 1$ if~$p=2$) and let $\lambda = \kappa + \lfloor \log_p n \rfloor$.
  One can
  compute an approximation modulo $(p^{\kappa}, t^{n+1})$ of the solution~$y$ of~\eqref{MainDiffEqu}
  given approximations modulo~$(p^{\lambda}, t^{n})$ of~$g$ and~$h$,
  using $O\left(M_\Z(p^\lambda,n) + C_h(p^\lambda, n) \right)$ bit operations.
\end{theo}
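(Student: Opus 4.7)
The plan is a Brent--Kung style doubling Newton iteration in the $t$-adic topology, combined with a differential precision analysis in the $p$-adic direction. Set $F(y) = y' - g\,h(y)$, so \eqref{MainDiffEqu} reads $F(y)=0$ with $y(0)=0$, and its Fréchet derivative at $y$ is the first-order linear operator $L_y(\delta) = \delta' - g\,h'(y)\,\delta$. Starting from $y_0 = 0$, I construct $y_{k+1} = y_k + \delta_k$, where $\delta_k$ (computed modulo $t^{2^{k+1}}$) solves $L_{y_k}(\delta_k) = -F(y_k)$ with $\delta_k(0)=0$. Since $v_t(F(y_k)) \geq 2^k - 1$, one has $v_t(\delta_k) \geq 2^k$, so the previously stabilized coefficients of $y_k$ are preserved. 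I would solve the inner linear ODE explicitly by variation of constants: letting $U_k = \exp\!\bigl(\int_0^t g\,h'(y_k)\,\ud t\bigr)$,
\[ \delta_k \;=\; U_k \int_0^t U_k^{-1}\cdot(-F(y_k))\,\ud t. \]

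For the cost, at iteration $k$ each ingredient (the composition $h(y_k)$, the derivative $h'(y_k)$, the product $g\,h'(y_k)$, the exponential $U_k$ and its inverse, two antiderivatives, and a handful of multiplications) can be computed modulo $(p^\lambda, t^{2^{k+1}})$ in $O(M_\Z(p^\lambda,2^k) + C_h(p^\lambda,2^k))$ bit operations. Summing over $k=0,\dots,\lceil\log_2 n\rceil$ and invoking the hypothesis $C_h(p^\lambda,2n)\geq 2 C_h(p^\lambda,n)$ (and the analogous standard property of $M_\Z$), the geometric series telescopes to $O(M_\Z(p^\lambda,n) + C_h(p^\lambda,n))$, matching the announced bound.

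The delicate part, and the main obstacle, is the $p$-adic precision analysis. Each antiderivative divides by integers up to $n$, losing up to $\lfloor \log_p n \rfloor$ digits, and the doubling loop performs $\Theta(\log n)$ of them: naive accounting would give a loss of $\Theta(\log n\cdot\log_p n)$, which is too much. To recover the sharp loss $\lfloor \log_p n \rfloor$, I would apply Lemma~\ref{lem:prec} to the analytic truncated solution map $\Phi:(g,h)\mapsto y \bmod t^{n+1}$: its differential $\ud_{(g,h)}\Phi$, read off from \eqref{MainDiffEqu}, is surjective and factors through a single integration pre-/post-multiplied by the unit integrating factor $\exp\!\bigl(\int g\,h'(y)\,\ud t\bigr)$, so that a ball of radius $p^{-\lambda}$ is sent onto a ball of radius at least $p^{-\kappa}$. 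This pins the \emph{intrinsic} loss at $\lfloor \log_p n \rfloor$; matching it to the \emph{algorithmic} loss then rests on the self-correcting property of Newton near a fixed point, so that any precision lost at step $k$ is recovered at step $k+1$ up to the intrinsic bound. The restriction $\kappa>1$ when $p=2$ comes from the convergence domain of the series exponential defining $U_k$, whose argument must have $p$-adic ``radius'' strictly larger than $1/(p-1)$, forcing a mild enlargement of the working precision in characteristic two.
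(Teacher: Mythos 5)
Your overall skeleton (a $t$-adic Newton doubling whose cost telescopes to the last step, plus a first-order $p$-adic precision analysis to beat the naive $\Theta(\log n\cdot\log_p n)$ loss) is the same as the paper's, but two steps of your argument have genuine gaps.

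First, the inner solve by variation of constants with $U_k=\exp\bigl(\int g\,h'(y_k)\bigr)$ is exactly what one must \emph{avoid} over $\Zp$. The formal exponential of a series in $t\,\Zpt$ is generally not in $\Zpt$: the coefficient of $t^j$ acquires denominators of valuation up to $v_p(j!)\sim j/(p-1)$, so computing $U_k$ and $U_k^{-1}$ at fixed precision $\lambda$ loses $\Theta(n/(p-1))$ digits, not $\lfloor\log_p n\rfloor$. (The losses would cancel in the product $U_k\int U_k^{-1}(-F(y_k))$ only in exact arithmetic.) The separation of variables is precisely what lets one replace the exact integrating factor by the \emph{integral} series $h(u)$ --- since $\int g\,h'(Y(g))=\log h(Y(g))$ --- and the paper's operator $N_g(u)=u-h(u)\int\bigl(u'/h(u)-g\bigr)$ is built on this substitution; one then has to re-prove quadratic convergence for this modified iteration (the paper's Proposition~\ref{prop:newton-exact}), which you do not get for free from the generic Newton argument. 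Relatedly, your explanation of the $\kappa>1$ restriction for $p=2$ via the convergence radius of $\exp$ is not where that constant comes from: in the paper it arises from the quantitative ``small enough ball'' condition in Lemma~\ref{lem:prec}, i.e.\ from bounding $\inf_{k\geq2}p^{-v_p(k!)/(k-1)}$ for the higher differentials of the solution map.

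Second, the step ``matching the intrinsic loss to the algorithmic loss rests on the self-correcting property of Newton'' is asserted rather than proved, and it is the heart of the matter. The iteration is self-correcting in the $t$-adic metric, not in the $p$-adic one; nothing recovers $p$-adic digits once they are lost. The actual mechanism (Proposition~\ref{prop:Ypertub}) uses \emph{both} inclusions of Lemma~\ref{lem:prec}: the surjectivity direction shows that an iterate known modulo $(p^\kappa,t^{m+1})$ is the \emph{exact} truncated solution of a perturbed equation $y'=\bar g\,h(y)$ with $\int(\bar g-g)\equiv0\pmod{p^\kappa}$; the exact Newton step then lands on $Y(\bar g)\bmod t^{2m}$, which the forward direction sends back to $Y(g)\pmod{p^\kappa,t^{2m}}$; finally one compares $N_g(u)$ with $N_{\bar g}(u)$ and accounts for the division indeterminacies in the fixed-precision model. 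Establishing the surjectivity statement with an explicit radius requires computing the operator norms of all higher differentials of the truncated solution map (in a norm on the source space twisted by $\smallint$), which is the content of Section~\ref{sec:etude-diff}; without it, your appeal to Lemma~\ref{lem:prec} does not yield a usable bound on ``small enough,'' and the induction through $\Theta(\log n)$ Newton steps does not close.
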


This result was already known in the linear case: \textcite{BosGonPer05} gave
the first proof and then \textcite{Grenet:2015} gave a simpler one.  In the
non-linear case, \textcite{LerSir08} obtained a weaker bound: they showed that
an approximation modulo~$(p^{\kappa}, t^n)$ can be computed from approximations
modulo~$(p^{\kappa + O(\log(n)^2)}, t^n)$ of~$g$ and~$h$.  A preliminary
version of the present work appeared in Vaccon's PhD thesis \parencite{Vac15}.
Naturally, the result also holds over unramified extensions of~$\Qp$,
see~\S\ref{sec:algo}.

\subsection{Applications}

\subsubsection{Newton sums}

The problem studied by \textcite{BosGonPer05} is the recovery of a polynomial given its Newton sums.
Let~$f\in\Zp[t]$ be a monic polynomial of degree~$d$, and let~$\nu_n$ be
the~$n${th} Newton sum of~$f$: if~$\alpha_1,\dotsc, \alpha_d$
are the roots of~$f$ in~$\overline{\Qp}$, then~$\nu_n$ is the
sum~$\alpha_1^n+\dotsb+\alpha_d^n$, it is an element of~$\Zp$.  How can we
recover~$f$ given Newton sums~$\nu_0,\dotsc, \nu_d$?  Let~$g$ be the polynomial~$x^d
f(1/x)$, and~$H_f$ be the generating function~$H_f(t) = \sum_{n\geq 0} \nu_{n+1}
t^n$.  Then~$g' = -H_f g$, so that~$g$ is a solution of a first-order linear
differential equation \parencite{Sch93}.  Therefore, knowing an approximation modulo~$(p^\kappa,
t^d)$ of~$H_f$ makes it possible to recover each coefficient of~$g$ (and hence~$f$) modulo~$p^{\kappa - \lfloor\log_p n\rfloor}$.

An interesting application is the computation over~$\Fp$ of composed products;
composed sums can be treated similarly~\parencite{BosGonPer05}.  Let~$f$ and~$g$ be monic polynomials
of~$\Fp[t]$ of degree~$d$ and~$e$ respectively, with associated
roots~$(\alpha_i)_{1\leq 1 \leq d}$ and~$(\beta_j)_{1\leq j\leq e}$
in~$\overline{\Fp}$.  We define the composed product of~$f$ and~$g$ to be
\[ f\otimes g = \prod_{i,j}(t - \alpha_i \beta_j ) = \mathop{res}_y(y^d f(t/y), g(y)), \]
this is a polynomial in~$\Fp[t]$ of degree~$de$.  Then~$H_{f\otimes g}$ is the
coefficient-wise product of the two power series~$H_f$ and~$H_g$ (also known as the Hadamard product).  This gives a
strategy to compute efficiently~$f\otimes g$.  Firsty, arbitrarily lift~$f$
and~$g$ as polynomials in~$\Zp[t]$, denoted~$\bar f$ and~$\bar g$. The composed
product~$\bar f \otimes \bar g$ is a polynomial in~$\Zp[t]$ and equals~$f\otimes g$
modulo~$p$.  Secondly, compute approximations
modulo~$(p^\kappa, t^{de})$ of~$H_{\bar f}$ and~$H_{\bar g}$, with~$\kappa = 1 + \lfloor \log_p(de) \rfloor$, and, with a
coefficient-wise product, an approximation modulo~$( p^\kappa, \smash{t^{de}})$
of~$H_{\bar f \otimes \bar g}$.  Thirdly, compute an approximation
modulo~$(p, t^{de+1})$ of~$\bar f \otimes \bar g$ using
Theorem~\ref{MainTheorem}, and deduce the value of~$f\otimes g$.

Another application of this procedure to root finding in finite fields is developed by \textcite{Grenet:2015}.
Our work does not improve the complexity given by \textcite{BosGonPer05}, in the case of a linear equation, but it
gives a simpler proof that generalizes to nonlinear differential equations.

\subsubsection{Isogeny computation}
\label{sec:intro-isogeny}

To compute normalized isogenies between elliptic curves, \textcite{Bostan:2008} and \textcite{LerSir08}
 studied the differential equation
\begin{equation}
\label{DiffEquQuad}y'^2 =g \cdot h(y),
\end{equation} where~$g$ and~$h$
are series in~$\Zpt$. In their context, this differential equation is known to admit a solution
in~$\Zpt$. Like the previous example, it comes from a lift of a problem over~$\Fp$.
This equation rewrites equivalently as~$y' = \smash{\sqrt{g}
\sqrt{h(y)}},$ and, when~$p\neq 2$, the series~$\sqrt{g}$ and~$\sqrt{h}$ are
still in~$\Zpt$, so we can apply Theorem~\ref{MainTheorem}.
The study of this equation when~$p=2$ is still an open problem.

In order to compute
an approximation of~$y$ modulo~$(p, t^{n+1})$,
we obtain that
it is enough to have approximations of~$g$ and~$h$ modulo~$(p^{1+ \lfloor\log_p n\rfloor}, t^n)$.
This improves upon the result of \textcite{LerSir08} which requires approximations modulo~$( p^{O(\log(n)^2)}, t^n)$.

\subsection*{Acknowledgements}
We are grateful to Alin Bostan, Xavier Caruso, 
Luca De~Feo, Reynald Lercier, Éric Schost and Kazuhiro Yokoyama 
for fruitful discussions.

\section{The algorithm}
\label{sec:algo}

We may consider the more general setting of an unramified finite
extension~$K$ of~$\Qp$. This
is useful, for example, for the computation of isogenies.
Let~$\cO_K$ denote the ring of integers of~$K$.
For example, we may naturally consider $K = \Qp$ and~$\cO_K = \Z_p$.

Let~$h \in \Opt$, a power series with integer coefficients, with~$h(0)=1$. 
For~$g\in\Kpt$, let~$Y(g)$ be the unique~$y\in \Kpt$ such that~$y(0)=0$
and~$y'=g\cdot h(y)$.  Existence and uniqueness are clear because the
differential equation rewrites equivalently into a well posed recurrence relation on the
coefficients of~$y$.

For~$g\in\Kpt$, let~$N_{g}$ denote the Newton operator:
\[ N_{g}(u) = u - h(u)\int \left(\frac{u'}{h(u)} - g \right), \]
where~$\int f$, for~$f\in\Kpt$, denotes the unique power series $F\in\Kpt$ such that~$F'=f$
and~$F(0)=0$.  

\begin{prop}
  \label{prop:newton-exact}
  Let~$g, u\in \Kpt$ be formal power series, and let~$n > 0$.
  If~$u = Y(g) \pmod{t^{n}}$ then
  \[ N_{g}(u) = Y(g) \pmod{t^{2n}}. \]
\end{prop}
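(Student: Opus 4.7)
The plan is to recognise $N_g$ as the Newton iteration for the functional equation $H(u) = \int g$, where $H$ is the antiderivative of $1/h$ with $H(0) = 0$. The claim is then the standard quadratic convergence statement for this iteration, which I can verify by direct formal expansion in $\Kpt$.

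Let $y = Y(g)$ and set $e = u - y$, so $e = O(t^n)$ by hypothesis. Since $n \geq 1$, the assumption also forces $u(0) = y(0) = 0$, hence $e(0) = 0$, which makes every antiderivative below unambiguous. First, I would Taylor-expand $h(u) = h(y+e)$ as a formal composition: this gives $h(u) = h(y) + h'(y)\,e + O(t^{2n})$ and hence $1/h(u) = 1/h(y) - h'(y)\,e/h(y)^2 + O(t^{2n})$. Combining this with $u' = y' + e'$ and the ODE identity $y' = g\,h(y)$, I would obtain
\[
\frac{u'}{h(u)} - g \;=\; \frac{e'}{h(y)} - \frac{g\, h'(y)\, e}{h(y)} + O(t^{2n-1}).
\]

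The key observation is that the right-hand side equals $\frac{\ud}{\ud t}\bigl(e/h(y)\bigr) + O(t^{2n-1})$: differentiating $e/h(y)$ and substituting $y' = g\,h(y)$ recovers exactly the two displayed leading terms. This identity is the formal reflection of $H'(y) = 1/h(y)$, and it is what makes the Newton residual linearise to a total derivative. Integrating from $0$ to $t$ using $e(0) = 0$ then yields $\int(u'/h(u) - g) = e/h(y) + O(t^{2n})$. Multiplying by $h(u)$ and using $h(u)/h(y) = 1 + O(t^n)$ gives $h(u)\int(u'/h(u) - g) = e + O(t^{2n})$, and therefore $N_g(u) - y = e - (e + O(t^{2n})) = O(t^{2n})$, as required.

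The main technical care is simply in the bookkeeping of $t$-adic orders: the residual $u'/h(u) - g$ itself is only $O(t^{2n-1})$ rather than $O(t^{2n})$, because $e'$ has one fewer power of $t$ than $e$, but the missing power is recovered by the integration step. Nothing else in the argument is non-formal; in particular the $p$-adic structure plays no role here, and the computation is valid in any characteristic-zero formal power series ring with $h(0)$ a unit.
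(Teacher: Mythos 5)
Your proof is correct, but it takes a different route from the paper's. You measure progress by the \emph{error} $e = u - Y(g)$ and show directly that $N_g(u) - Y(g) = O(t^{2n})$, the key step being the identity $u'/h(u) - g = \frac{\ud}{\ud t}\bigl(e/h(Y(g))\bigr) + O(t^{2n-1})$, i.e.\ the linearisation of the functional equation $H(u) = \int g$ with $H' = 1/h$. The paper instead measures progress by the \emph{residual} $e = u'/h(u) - g$: it expands $h(N_g(u))$ to first order around $h(u)$, substitutes $u' = h(u)(e+g)$, finds that the new residual equals $-e\,h(u)h'(u)\int e = O(t^{2n-1})$, and then concludes $N_g(u) = Y(g) \pmod{t^{2n}}$ by uniqueness of the solution of the initial value problem. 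Your version never needs that final ``small residual implies small error'' step (which the paper leaves implicit), at the cost of introducing $Y(g)$ explicitly and expanding around it throughout; the paper's version stays self-contained in $u$ and only invokes $Y(g)$ at the very end. Both computations are complete, and your order bookkeeping ($e'$ only $O(t^{n-1})$, with the lost power recovered by integration; $e(0)=0$ making the antiderivatives unambiguous) is handled correctly.
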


\begin{proof} 
  Let~$e = {u'}/{h(u)} - g$.
  Since~$u = Y(g) \pmod{t^{n}}$,
  we have~$e = 0 \pmod{t^{n-1}}$ and~$\int {e} = 0 \pmod{t^n}$.
  With~$v = N_{g}(u)$, we compute
  \begin{align*}
    v' - g\, h(v) &= u' - u' h'(u) \smallint {e} - h(u) e - g\, h(v).
  \end{align*}
  Then, the first-order expansion of~$h(v)$ at~$h(u)$ gives
  \[ h(v) = h(u) - h'(u) h(u) \smallint {e} \mod{t^{2n}}, \]
  and, using the equality~$u' = h(u) e + g\, h(u)$,
  we obtain
  \[ v' - g\, h(v) = - e \, h(u) h'(u) \smallint {e} = 0 \mod{t^{2n - 1}}. \]
  This implies that~$v = Y(g) \pmod{t^{2n}}$.
\end{proof}

The iteration of the Newton operator leads to
Algorithm~\ref{algo:newton}.  In an exact setting, the correctness of
this procedure would be clear, thanks to Proposition~\ref{prop:newton-exact}.
In a \padic setting, where the coefficients of the power series~$g$ and~$h$ are
known with finite precision only, what can be obtained with Newton iteration is
not clear because the operation~$\smallint$ involves divisions.

\begin{algo}[t]
  \caption[The Newton iteration to solve a first-order differential equation.]
  {The Newton iteration to solve a first-order differential equation.
    \begin{description}
      \item[Input.] $g$ and~$h \in \Opt$ given modulo~$(p^\lambda, t^{n})$
        ---~that is, given as polynomials of degree less than~$n$ with
        coefficients in~$\cO_K/p^\lambda \cO_K$.
      \item[Output.] A power series~$u \in \Opt$ given modulo~$(p^\lambda, t^{n+1})$. 
      \item[Specification.]
        If~$Y(g) \pmod{t^{n+1}}$ has integer coefficients and if~$\lambda \geq \kappa + \lfloor \log_p n \rfloor$, then~$u$
        is an approximation of~$Y(g)$ modulo~$(p^{\kappa}, t^{n+1})$.
    \end{description}
  }
  \label{algo:newton}
  \begin{algorithmic}
    \Function{DSol}{$g$, $h$, $n$}
      \If{$n=0$}
      \State \textbf{return} $0 \pmod{t}$
      \Else
      \State $u \gets \textsc{DSol}(g, h, \lceil \frac{n-1}{2} \rceil)$
      \State \textbf{return} $N_g(u) \pmod{t^{n+1}}$
        \State { }\Comment \parbox[t]{.55\linewidth}{Compute at fixed precision~$\lambda$.}
      \EndIf
    \EndFunction
  \end{algorithmic}
\end{algo}

Let us begin with a quick analysis of Algorithm~\ref{algo:newton}.
On input~$(g,h,n)$, it performs a recursive call and
computes~$u = \textsc{DSol}(g, h, m)$, with~$m=\lceil \frac{n-1}{2} \rceil$.
Let us assume that~$u$ is an approximation modulo~$(p^\kappa, t^{m+1})$
of~$Y(g)$, for some~$\kappa > 0$.  Then~$N_g(u)$ is an approximation
modulo~$(p^{\kappa - \lfloor \log_p n \rfloor}, t^{n+1})$: Indeed, the
computation of~$N_g(u)$ involves divisions by the integers from~$2$ to~$n$ on distinct coefficients, so
the loss of precision is at most the maximum valuation of these integers, which
is~$\lfloor \log_p n \rfloor$.
Thus, if we define~$\mu(0)=0$ and~$\mu(n) = \lfloor \log_p n \rfloor + \mu(\lceil \frac{n-1}{2} \rceil)$,
we obtain that~$\textsc{DSol}(g, h, n)$ is an approximation modulo~$(p^{\lambda - \mu(n)}, t^{n+1})$
of~$Y(g)$.
We can check that~$\mu(n) = O(\log(n)^2)$.
Theorem~\ref{thm:correctness} improves on that analysis and shows that the precision of the result is at least~$\lambda - \lfloor \log_p n \rfloor$ and matches the intrinsic loss of precision.

We assume the \emph{fixed precision model} for computing with \padic numbers:
at precision~$\lambda$, it amounts to work over the ring~$\cO_K/p^\lambda\cO_K$.  When a
division~$a/b$ arises, with the approximation of~$a$ and~$b \in \cO_K$ given
in~$\cO_K/p^\lambda\cO_K$, three cases may arise:
\begin{itemize}
  \item $v_p(b) = 0$, in which case~$b$ is invertible in~$\cO_K/p^\lambda\cO_K$ and the division is well defined.
  \item $v_p(a) \geq v_p(b) > 0$, in which case~$a/b$ is in~$\cO_K$ but its approximation in~$\cO_K/p^\lambda\cO_K$ is not fully determined by the approximations of~$a$ and~$b$, so $a/b$ is arbitrarily defined in this model to be the class in~$\cO_K/p^\lambda\cO_K$ of the smallest integer~$c \geq 0$ such that~$a=bc \pmod{p^\lambda}$.%
  \item $v_p(a) < v_p(b)$, in which case~$a/b$ is not an integer and an error is raised.
\end{itemize}

The main argument for the correctness of Algorithm~\ref{algo:newton} is the
following proposition, proved in Section~\ref{sec:etude-diff}.

\begin{prop}
  \label{prop:Ypertub}
  Let~$n > 0$ and~$\kappa > 0$ (or~$\kappa > 1$ if~$p=2$) be integers, and let
  $g\in \Opt$ such that~$Y(g) \pmod{t^{n+1}}$ has integer coefficients.
  For any~$y\in\Kpt$ the following are equivalent:
  \begin{enumerate}
    \item $y = Y(\bar g) \pmod{t^{n+1}}$ for some power series~$\bar g \in \Opt$ such that~$\int(\bar g - g) = 0 \pmod{p^\kappa}$;
    \item $y = Y(g) \pmod{p^\kappa, t^{n+1}}$.
  \end{enumerate}
\end{prop}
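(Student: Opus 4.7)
The plan is to reformulate~\eqref{MainDiffEqu} by separation of variables: the series $H(u) = \int_0^u \frac{ds}{h(s)} \in K\lb u\rb$ turns the equation $y' = g\cdot h(y)$, $y(0)=0$ into the algebraic identity $H(y) = \int g$. Hence $H(Y(\tilde g)) = \int \tilde g$ for any $\tilde g \in \Opt$, and the assumption $\int(\bar g - g) \equiv 0 \pmod{p^\kappa}$ in~(1) is exactly $H(Y(\bar g)) \equiv H(Y(g)) \pmod{p^\kappa}$. The crucial preparatory lemma is a sharp estimate on the Taylor coefficients of~$H$: writing $1/h(s) = \sum_m c_m s^m$ with $c_m \in \cO_K$ and using the identity $\frac{1}{m+1}\binom{m+1}{k} = \frac{1}{k}\binom{m}{k-1}$ for $k \geq 1$,
\[
  \frac{H^{(k)}(y)}{k!} \;=\; \frac{1}{k} \sum_{m \geq k-1} c_m \binom{m}{k-1}\, y^{m+1-k}
  \;\in\; \tfrac{1}{k}\Opt,
\]
so the $k$-th Taylor coefficient of~$H$ has denominator only~$k$, not~$k!$.

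For direction (1) $\Rightarrow$ (2), set $y = Y(g)$, $\bar y = Y(\bar g)$, $\delta = \bar y - y$. Taylor-expanding $H$ at~$y$ and isolating the linear term, legitimate since $H'(y) = 1/h(y)$ is a unit in~$\Opt$ because $h(0) = 1$, yields the fixed-point identity
\[
  \delta \;=\; h(y)\int(\bar g - g) \;-\; h(y)\sum_{k \geq 2}\frac{H^{(k)}(y)}{k!}\,\delta^k.
\]
I then prove $v_p(\delta_i) \geq \kappa$ by induction on $i$ from $0$ to~$n$. The base $\delta_0 = 0$ is immediate. In the inductive step the first term contributes $v_p \geq \kappa$ by hypothesis; and for each $k \geq 2$, $[\delta^k]_\ell$ (for $\ell \leq i$) is a sum of products $\delta_{a_1}\cdots\delta_{a_k}$ with each $a_r \leq \ell-(k-1) < i$, so $v_p([\delta^k]_\ell) \geq k\kappa$ by induction; combined with the $1/k$ factor this makes the $k$-th term contribute $v_p \geq k\kappa - v_p(k)$. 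The hypothesis on~$\kappa$ exactly guarantees $(k-1)\kappa \geq v_p(k)$ for every $k \geq 2$, since $\max_{k \geq 2} v_p(k)/(k-1) = 1/(p-1)$, so every contribution is $\geq \kappa$.

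For direction (2) $\Rightarrow$ (1), extend~$y$ to $\tilde y \in \Opt$ by setting $\tilde y_i = Y(g)_i$ for $i > n$, so that $\tilde y \equiv Y(g) \pmod{p^\kappa}$ globally and $\tilde y(0) = 0$. Put $\bar g = \tilde y'/h(\tilde y) \in \Opt$ (well-defined since $h(\tilde y)$ is a unit of~$\Opt$). By uniqueness $Y(\bar g) = \tilde y$, hence $Y(\bar g) \equiv y \pmod{t^{n+1}}$. The remaining congruence $\int(\bar g - g) = H(\tilde y) - H(Y(g)) \equiv 0 \pmod{p^\kappa}$ follows from the Taylor expansion of~$H$ at~$Y(g)$ and the same denominator-counting argument applied to $\tilde y - Y(g) \equiv 0 \pmod{p^\kappa}$.

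The main obstacle is the sharp bookkeeping of denominators in the Taylor coefficients of~$H$: the naive $1/k!$ bound would force $\kappa \gtrsim v_p(n!)$, which grows with~$n$ and would defeat the purpose of the proposition, whereas the refined $1/k$ bound exactly balances the $k\kappa$ valuation gained from the $k$-fold self-convolution of~$\delta$.
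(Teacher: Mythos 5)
Your proof is essentially correct, but it follows a genuinely different route from the paper. The paper stays inside the differential-precision framework of Caruso--Roe--Vaccon: it computes all higher differentials of the solution map $\cY:g\mapsto Y(g)$ (Lemma~\ref{lem:diffY}), bounds their operator norms by~$1$ (Lemma~\ref{lem:normDY}), and then invokes their Corollary~3.16 as a black box with $M_k=|1/k!|$ to get $\cY(g+B_\varepsilon)=\cY(g)+\ud_g\cY(B_\varepsilon)$ (Proposition~\ref{prop:CRV}), of which Proposition~\ref{prop:Ypertub} is a rewording. You instead linearize the problem by separation of variables, $H(Y(\tilde g))=\smallint\tilde g$ with $H=\int\ud s/h(s)$, and run a direct coefficient-by-coefficient induction on the fixed-point identity for $\delta=Y(\bar g)-Y(g)$. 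The decisive point in your argument --- that the $k$-th Taylor coefficient $H^{(k)}/k!$ carries a denominator $k$ rather than $k!$ --- is correct and is sharper than the bound $M_k=|1/k!|$ the paper feeds into the black box; as a consequence your condition $(k-1)\kappa\geq v_p(k)$ is satisfied already for $\kappa=1$ when $p=2$, so you actually prove slightly more than the statement (the paper's restriction $\kappa>1$ for $p=2$ is an artifact of its method, since $\inf_k p^{-v_p(k!)/(k-1)}=1/2$ there). Your approach is more elementary and self-contained; the paper's buys generality (the same machinery gives the norm computation used elsewhere in Section~\ref{sec:etude-diff}) at the cost of citing an external result.

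Two small wrinkles you should repair, both stemming from the fact that the hypothesis only makes $Y(g)\bmod t^{n+1}$ integral, not all of $Y(g)$. First, in $(2)\Rightarrow(1)$ your extension $\tilde y_i=Y(g)_i$ for $i>n$ need not lie in $\cO_K$, so $\bar g=\tilde y'/h(\tilde y)$ need not be in $\Opt$; take instead $\tilde y$ to be the truncation of $y$ at $t^{n+1}$ (a polynomial in $\Opt$) and, if you want $\int(\bar g-g)\equiv0\pmod{p^\kappa}$ in \emph{all} degrees, replace $\bar g$ by $g$ in degrees $\geq n$, which does not affect $Y(\bar g)\bmod t^{n+1}$. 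Second, the membership $h(y)H^{(k)}(y)/k!\in\frac1k\Opt$ as stated uses integrality of every coefficient of $y=Y(g)$; since $y(0)=\delta(0)=0$, the coefficient of $t^i$ in each term only involves $Y(g)_\ell$ for $\ell\leq i\leq n$, so the induction goes through under the stated hypothesis, but this restriction to degrees $\leq n$ should be said explicitly. Neither issue affects the substance of the argument.
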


\begin{theo}\label{thm:correctness}
  Algorithm~\ref{algo:newton} is correct:
  if~$\kappa > 0$ (or~$\kappa > 1$ if~$p=2$) and $\lambda \geq \lfloor \log_p n \rfloor + \kappa $, then for all~$g\in \Opt$ such that~$Y(g)$ has integer coefficients,
  the output of the procedure~$\textsc{DSol}(g,h,n)$ equals~$Y(g) \pmod{p^\kappa, t^{n+1}}$.

  Moreover, it performs~$\cO\left(M_{\cO_K}(p^\lambda,n) + C_h(p^\lambda, n) \right)$ bit operations,
  where~$M_{\cO_K}(p^\lambda,n)$ is the cost of computing the product of two polynomials of
degree~$n$ with coefficients in~$\cO_K/p^{\lambda}\cO_K$.
\end{theo}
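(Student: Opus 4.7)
The plan is an induction on $n$ for correctness, followed by a standard geometric-series argument for the complexity. The quick analysis sketched just before the theorem already controls the $t$-adic precision, but leaks precision on $p$-adic divisions by small integers; the improvement will come from converting the inductive congruence into an exact equality by perturbing $g$, applying Proposition~\ref{prop:newton-exact} to the perturbed equation, and then converting back via Proposition~\ref{prop:Ypertub}.

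In the inductive step, I set $m = \lceil (n-1)/2 \rceil$ and let $u$ be the output of the recursive call, so that by the induction hypothesis $u \equiv Y(g) \pmod{p^\kappa, t^{m+1}}$. Applying the implication $(2)\Rightarrow(1)$ of Proposition~\ref{prop:Ypertub} at $m$ in place of $n$ produces a series $\bar g \in \Opt$ with $\smallint(\bar g - g) \equiv 0 \pmod{p^\kappa}$ and $u = Y(\bar g) \pmod{t^{m+1}}$. Proposition~\ref{prop:newton-exact} applied to this $\bar g$ yields $N_{\bar g}(u) = Y(\bar g) \pmod{t^{2m+2}}$, hence modulo $t^{n+1}$ since $2m+2 \geq n+1$. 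The one-line identity
\[ N_g(u) - N_{\bar g}(u) = -h(u)\,\smallint(\bar g - g), \]
whose right-hand side vanishes modulo $p^\kappa$, gives $N_g(u) \equiv N_{\bar g}(u) \pmod{p^\kappa}$. Combining and invoking the implication $(1)\Rightarrow(2)$ of Proposition~\ref{prop:Ypertub} at $n$, I obtain $N_g(u) \equiv Y(g) \pmod{p^\kappa, t^{n+1}}$, closing the induction.

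Separately, I verify that the \emph{computed} value of $N_g(u) \bmod t^{n+1}$ in the fixed-precision model at precision $\lambda$ coincides with its exact value modulo $p^\kappa$. Every operation appearing in the definition of $N_g$ is precision preserving, except the antiderivative $\smallint$, which divides the $i$-th coefficient by $i$ for $i \leq n$ and loses at most $v_p(i) \leq \lfloor \log_p n \rfloor$ digits. The slack $\lambda - \kappa \geq \lfloor \log_p n \rfloor$ is exactly what is needed. The division by $h(u)$ appearing in $N_g$ is harmless because $h(0)=1$ is a unit, so $h(u)$ is invertible modulo $p^\lambda$.

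Finally, for the complexity, the work at the current level amounts to a constant number of truncated products at size $n+1$ and precision $\lambda$, plus one composition with $h$, for a cost of $O(M_{\cO_K}(p^\lambda, n) + C_h(p^\lambda, n))$; summing $\sum_{k}(M_{\cO_K}+C_h)(p^\lambda,n/2^k)$ using the super-additivity hypothesis $C_h(p^\lambda,2n)\geq 2 C_h(p^\lambda,n)$ (and the standard analogue for $M_{\cO_K}$) gives the stated bound. The main obstacle I foresee is the two-way use of Proposition~\ref{prop:Ypertub}: because Proposition~\ref{prop:newton-exact} is a statement about exact arithmetic, the argument must enter and exit the exact world through a perturbation $\bar g$ of $g$, and it is precisely this maneuver — rather than any brute-force precision tracking — that reduces the recursive loss from $O(\log(n)^2)$ to the optimal $\lfloor \log_p n\rfloor$.
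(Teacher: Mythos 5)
Your proposal is correct and follows essentially the same route as the paper's proof: induction on $n$, converting the inductive congruence $u = Y(g) \pmod{p^\kappa,t^{m+1}}$ into an exact identity $u = Y(\bar g) \pmod{t^{m+1}}$ via Proposition~\ref{prop:Ypertub}, applying Proposition~\ref{prop:newton-exact} to $\bar g$, comparing $N_g(u)$ and $N_{\bar g}(u)$ by the same one-line identity, returning via Proposition~\ref{prop:Ypertub}, and handling the fixed-precision divisions in $\smallint$ with the slack $\lambda-\kappa \geq \lfloor\log_p n\rfloor$. The complexity argument (last iteration dominates under the doubling hypothesis) is also the paper's.
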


\begin{proof}
  We proceed by induction on~$n$. The case~$n=0$ is trivial, so let us assume that~$n>0$.  Let~$g \in \Opt$ such that~$Y(g)$ has integer coefficients,
  let~$m = \lceil \frac{n-1}{2} \rceil$ and let~$u\in\Kpt$ be the output of~$\textsc{DSol}(g, h,
  m)$.  By induction hypothesis, $u = Y(g)
  \pmod{p^\kappa, t^{m+1}}$. (In particular~$u$ has integer coefficients.)
  By Proposition~\ref{prop:Ypertub}, this implies that~$y = Y(\bar g) \pmod{t^{m+1}}$
  for some~$\bar g\in \Opt$ such that~$\int(g - \bar g) \pmod{p^\kappa}$.
  Proposition~\ref{prop:newton-exact} gives that $Y(\bar g) = N_{\bar g}(u) \pmod{t^{n+1}}$
  and Proposition~\ref{prop:Ypertub} gives further that $Y(\bar g) = Y(g) \pmod{p^\kappa, t^{n+1}}$.
  We check that
  $N_g(u) = N_{\bar g}(u) - h(u) \smallint(g - \bar g)$
  and since~$h(u)$ has integer coefficients, this implies that
  $N_g(u) = Y(g) \pmod{p^\kappa, t^{n+1}}$.
  
  We now relate~$N_g(u)$ to the output of the procedure $\textsc{DSol}(g, h, n)$.
  Let~$e = u'/h(u) - g$.
  By definition, the output is~$N_{g}(u)  = u - h(u) \int e$, computed
  over~$\cO_K/p^\lambda\cO_K$, in the fixed precision model.  Let~$E$
  be the primitive~$\int e \pmod{t^{n+1}}$ computed in this model, so that the
  output is exactly~$u - h(u)E$.  Clearly~$E = \int e + \int \eta
  \pmod{t^{n+1}}$ for some~$\eta = 0 \pmod{p^\lambda}$ that reflects the
  indeterminacies in the divisions.
  Since~$\lambda \geq \lfloor \log_p n \rfloor + \kappa$, $\int\eta = 0 \pmod{p^\kappa}$ and
  thus, the output~$N_g(u) + \int \eta$ equals~$Y(g) \pmod{p^\kappa, t^{n+1}}$.
  This concludes the proof of correctness.

  Concerning the complexity, the last iteration involves a composition by~$h$
  with cost~$C_h(p^\lambda, n)$, a few multiplications with
  cost~$M_{\cO_K}(p^\lambda,n)$ and an inversion~$1/h(u)$ with cost
  $\cO\left(M_{\cO_K}(p^\lambda,n)\right)$ too with a Newton iteration
  \parencite{Kun74}.  With the assumption that the cost of an iteration is
  greater than twice the cost of the previous one, it is well known that the cost
  of a Newton algorithm is dominated by the cost of the last iteration, which gives the result.
\end{proof}

The condition~$\lambda \geq \lfloor \log_p n \rfloor + \kappa$ cannot be improved further: it matches the \emph{intrinsic} loss of precision. This is shown, for example, by the differential equation~$y' = \smash{at^{b-1}}$, with~$a\in K$, whose solution is~$\frac ab t^b$. If we take~$b = p^{\lfloor \log_p n \rfloor}$ and if~$a$ is known at precision~$\lambda$ then~$y$ is known at precision no more than~$\lambda - \lfloor \log_p n \rfloor$.

\section{Differential precision}
\label{sec:etude-diff}

We apply the method of \textcite{CarRoeVac14} to study the loss in precision in
the resolution of the differential equation \eqref{MainDiffEqu} and give a
proof of Proposition~\ref{prop:Ypertub}.

Let~$n>0$ and
let~$E$ and~$F$ denote respectively the two $n$-dimensional $K$-vector spaces $\Kpt/(t^{n})$ and~$t\Kpt/(t^{n+1})$.
Let~$\cY$ be the polynomial map
\begin{align*}
  \cY : E &\longrightarrow F \\
        [ u ] &\longmapsto [ Y(u) ],
\end{align*}
which is well defined because the~$n+1$ first coefficients of~$Y(u)$ depend only on the~$n$ first coefficients of~$u$.
Let~$g\in\Opt$ be such that~$\cY(g)$ has integer coefficients in the monomial basis.
Let~$\ud\cY$ denote the first differential of~$\cY$: for any~$g\in E$, $\ud_g \cY$ is a linear map~$E\to F$.
Let~$d^k\cY$ denote the higher differentials: for any~$g\in E$, $\ud^k_g \cY$ is a multilinear map~$E^k\to F$.

\begin{lem}
  \label{lem:diffY}
  For any~$w\in E$, $\ud \cY(w) = h(\cY) \smallint w$. 
  Moreover, for any~$k\geq 1$, there exists a polynomial~$P_k\in \Z[u_0,\dotsc,u_{k-1}]$
  such that for any~$w_1,\dotsc,w_k\in E$,
  \begin{multline}\label{eqn:higher-diff}
  \ud^k \cY(w_1,\dotsc,w_k) = \\ P_k\left( h(\cY), h'(\cY), \dotsc, h^{(k-1)}(\cY) \right) \prod_{i=1}^k \smallint w_i.
  \end{multline}
\end{lem}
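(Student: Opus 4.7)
The plan is to compute $\ud\cY$ directly from the defining ODE by a first-order perturbation argument, and then to obtain the higher differentials by induction on $k$, exploiting the fact that each factor $\smallint w_i$ is independent of $g$ so that differentiation only acts on the prefactor.

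For the base case, I fix $g\in E$, write $Y = \cY(g)$, and perturb $g\mapsto g+\varepsilon w$. Writing $\cY(g+\varepsilon w) = Y+\varepsilon z + O(\varepsilon^2)$ with $z = \ud\cY(w)$ and substituting into $Y' = g\cdot h(Y)$, the coefficient of~$\varepsilon$ yields the linear equation
\[ z' = h(Y)\,w + g\,h'(Y)\,z,\qquad z(0)=0. \]
I then check directly that $z = h(Y)\smallint w$ is a solution: using $Y' = g\,h(Y)$,
\[ \bigl(h(Y)\smallint w\bigr)' = h'(Y)\,Y'\smallint w + h(Y)\,w = g\,h'(Y)\,\bigl(h(Y)\smallint w\bigr) + h(Y)\,w. \]
Uniqueness of the solution of this linear ODE on coefficients gives $\ud\cY(w) = h(\cY)\smallint w$.

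For the induction step, I assume \eqref{eqn:higher-diff} holds at order $k$ and differentiate both sides in $g$ along $w_{k+1}$. The factor $\prod_{i=1}^k \smallint w_i$ is independent of $g$, so only the prefactor $P_k(h(\cY),\dots,h^{(k-1)}(\cY))$ contributes. The chain rule together with the base case gives
\[ \ud\bigl[g\mapsto h^{(j)}(\cY)\bigr](w_{k+1}) = h^{(j+1)}(\cY)\,\ud\cY(w_{k+1}) = h^{(j+1)}(\cY)\,h(\cY)\smallint w_{k+1}, \]
so a further application of the chain rule to $P_k$ yields the formula at order $k+1$ with
\[ P_{k+1}(u_0,\dots,u_k) = u_0\sum_{j=0}^{k-1} u_{j+1}\,\frac{\partial P_k}{\partial u_j}(u_0,\dots,u_{k-1}). \]
Starting from $P_1(u_0) = u_0$ and noting that partial derivatives of integer polynomials are integer polynomials, the induction produces $P_k\in\Z[u_0,\dots,u_{k-1}]$ for every $k\geq 1$.

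There is no substantial obstacle: once the identity $\ud\cY(w) = h(\cY)\smallint w$ is in hand, everything else is mechanical because the $\smallint w_i$ factors are $g$-independent. The only care needed is that we work in the truncated spaces $E = \Kpt/(t^n)$ and $F = t\Kpt/(t^{n+1})$, but since these truncations commute with all the polynomial operations involved, the formulas transfer verbatim to the truncated setting.
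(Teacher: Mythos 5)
Your proof is correct and follows essentially the same route as the paper: differentiate the defining relation to get the inhomogeneous linear ODE characterizing $\ud_g\cY(w)$, identify its unique solution $h(\cY)\smallint w$, and then induct on $k$ by differentiating the prefactor, arriving at the same recurrence $P_{k+1}(u_0,\dotsc,u_k) = u_0\sum_j u_{j+1}\,\partial P_k/\partial u_j$. Your explicit verification that $h(Y)\smallint w$ solves the linear equation is a small (welcome) addition over the paper's appeal to uniqueness, but the argument is the same.
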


\begin{proof}
  Differentiating with respect to~$g$ the defining relation~$\cY(g)' = g \cdot h(\cY(g)) \pmod{t^{n}}$
  leads to
  \[ \left[ \ud_g \cY(w) \right]' = w \cdot h(\cY(g)) + g \cdot h'(\cY(g)) \cdot \ud_g \cY(w), \]
  which is a first-order inhomogeneous linear differential equation in~$\ud_g \cY(w)$.
  The initial condition~$\ud_g \cY(w)(0) = 0$ determines a unique solution, namely $h(\cY(g)) \int w$.

  The second claim follows by induction. Equation~\eqref{eqn:higher-diff} holds for~$k=1$ with~$P_1=u_0$;
  differentiating it leads to the  recurrence relation
  \[ P_{k+1}(u_0,\dotsc,u_k) = u_0 \sum_{i=0}^{k-1} \frac{\partial P}{\partial u_i} u_{i+1}. \qedhere \]
\end{proof}

The space~$F$ is endowed with the maximum norm in the
monomial basis, denoted by~$\|\cdot\|_F$. In particular, an element~$u$ of~$F$
has integer coefficients if and only if~$\|u\|_F \leq 1$.
The space~$E$ is endowed with the norm
\[ \|u\|_E \eqdef \left\| \smallint u \right \|_F. \]
Let~$\| \ud^k_g \cY \|$ denote the operator norm of~$\ud_g^k \cY$, that is
\begin{equation}
 \| \ud^k_g \cY \| = \sup\left\{ \| \ud_g^k\cY(w_1,\dotsc,w_k) \|_F \st  \|\smallint w_i\|_F \leq 1 \right\}.  
  \label{eqn:opnorm}
\end{equation}

\begin{lem}
  \label{lem:normDY}
  $\| \ud^k_g \cY \| \leq 1$, for any~$k\geq 1$.
\end{lem}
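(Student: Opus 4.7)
The plan is to read off the bound directly from the explicit formula for $\ud^k \mathcal{Y}$ established in Lemma~\ref{lem:diffY}, by checking that every factor appearing in that formula is a power series with integer coefficients, and then invoking the ultrametric inequality.

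First I would handle the case $k=1$ as a warm-up: by Lemma~\ref{lem:diffY}, $\ud_g \mathcal{Y}(w) = h(\mathcal{Y}(g)) \smallint w$. The series $h$ lies in $\Opt$ and $\mathcal{Y}(g)$ has integer coefficients by assumption and vanishes at $t=0$, so the composition $h(\mathcal{Y}(g))$ has integer coefficients. Since the product of two series with coefficients in $\cO_K$ again has coefficients in $\cO_K$, and $\|\smallint w\|_F \leq 1$ by the definition of the supremum in~\eqref{eqn:opnorm}, we conclude $\|\ud_g \mathcal{Y}(w)\|_F \leq 1$.

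For general $k$, I would apply the same reasoning to the formula
\[ \ud^k_g \mathcal{Y}(w_1,\dots,w_k) = P_k\bigl(h(\mathcal{Y}), h'(\mathcal{Y}), \dots, h^{(k-1)}(\mathcal{Y})\bigr) \prod_{i=1}^k \smallint w_i. \]
The key point to verify is that each factor has integer coefficients. For the derivatives $h^{(j)}$: writing $h = \sum a_i t^i$ with $a_i \in \cO_K$, one has $h^{(j)} = \sum_i a_i \cdot i(i-1)\cdots(i-j+1)\, t^{i-j}$, which still has coefficients in $\cO_K$ because the falling factorial is an ordinary integer. Composing with $\mathcal{Y}(g) \in t\,\Opt$ preserves integrality, so each $h^{(j)}(\mathcal{Y}(g))$ lies in $\Opt$. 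Since Lemma~\ref{lem:diffY} asserts $P_k \in \Z[u_0,\dots,u_{k-1}]$, the evaluation $P_k(h(\mathcal{Y}),\dots,h^{(k-1)}(\mathcal{Y}))$ has integer coefficients as well.

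The conclusion then follows from the ultrametric property: a product of series in $\Opt$ has all coefficients in $\cO_K$, hence norm at most $1$ with respect to $\|\cdot\|_F$. Taking the supremum over tuples $(w_1,\dots,w_k)$ with $\|\smallint w_i\|_F \leq 1$ gives $\|\ud^k_g \mathcal{Y}\| \leq 1$. I do not anticipate a real obstacle here: the entire argument is an integrality check, and the only subtle observation is that the formal derivatives $h^{(j)}$ remain integral despite the apparent appearance of denominators (which cancel against the falling factorials built into differentiation).
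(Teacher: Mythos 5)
Your proposal is correct and follows exactly the paper's (very terse) argument: the paper's proof simply states that the bound follows from Lemma~\ref{lem:diffY} and Equation~\eqref{eqn:opnorm} because $h$ and $\cY(g)$ have integer coefficients, and you have merely spelled out the integrality checks (derivatives of integral series remain integral, $P_k$ has integer coefficients, products stay in $\Opt$) that the paper leaves implicit. No discrepancy.
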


\begin{proof}
  Since~$h$ and~$\cY(g)$ have integer coefficients, this follows easily from Lemma~\ref{lem:diffY}
  and Equation~\eqref{eqn:opnorm}.
\end{proof}

\begin{prop}\label{prop:CRV}
  For any~$\varepsilon \leq \frac1p$ (or $\varepsilon \leq \frac14$ for~$p=2$),
  \[ \cY(g + B_\varepsilon) = \cY(g) + \ud_g\cY(B_\varepsilon), \]
  where~$B_\varepsilon = \left\{ w\in E \ :\  \|w\|_E \leq \varepsilon \right\}$.
\end{prop}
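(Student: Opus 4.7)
The plan is to apply Lemma~\ref{lem:prec} quantitatively, using the bounds on higher differentials provided by Lemma~\ref{lem:normDY}, and following the general framework of \textcite{CarRoeVac14}. Since $\cY$ is a polynomial map between finite-dimensional $K$-vector spaces, there is no convergence issue in writing the finite Taylor expansion
\[
  \cY(g+w) = \cY(g) + \sum_{k \geq 1} \tfrac{1}{k!}\, \ud^k_g \cY(w, \dotsc, w),
\]
and the whole proof reduces to controlling the higher-order terms $p$-adically.

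First I would record that the linear map $\ud_g \cY : E \to F$ is an isometric isomorphism. Indeed, by Lemma~\ref{lem:diffY} it is $w \mapsto h(\cY(g)) \smallint w$; the factor $h(\cY(g))$ lies in $\Opt$ with constant term $h(0)=1$, hence is a unit of $\Opt$, so multiplication by it preserves $\|\cdot\|_F$ on $F$; and $\int : E \to F$ is an isometry by definition of $\|\cdot\|_E$. Consequently $\ud_g \cY(B_\varepsilon)$ is exactly the closed $\varepsilon$-ball $B'_\varepsilon \subset F$ for the norm $\|\cdot\|_F$.

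For the forward inclusion $\cY(g+B_\varepsilon) \subseteq \cY(g) + \ud_g\cY(B_\varepsilon)$, I would bound each Taylor term using Lemma~\ref{lem:normDY}:
\[ \bigl\| \tfrac{1}{k!}\,\ud^k_g \cY(w,\dotsc,w) \bigr\|_F \leq \varepsilon^k\, p^{v_p(k!)}, \]
and then check, via Legendre's formula $v_p(k!) = (k-s_p(k))/(p-1)$, that $\varepsilon^{k-1} p^{v_p(k!)} \leq 1$ for every $k \geq 2$ under the hypothesis on $\varepsilon$. The ultrametric triangle inequality then gives $\cY(g+w) - \cY(g) \in B'_\varepsilon$.

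For the reverse inclusion, I would reformulate the problem as a $p$-adic fixed-point equation. Given $v \in B_\varepsilon$, seeking $w' \in B_\varepsilon$ with $\cY(g+w') = \cY(g) + \ud_g\cY(v)$ is equivalent, after inverting the isometry $\ud_g\cY$, to finding a fixed point of
\[
  \Phi(w') \;=\; v \;-\; \sum_{k \geq 2} \tfrac{1}{k!}\, \ud_g\cY^{-1}\,\ud^k_g\cY(w',\dotsc,w').
\]
The same estimate as above shows $\Phi(B_\varepsilon) \subseteq B_\varepsilon$, and differentiating under the sum yields $\|\ud_{w'}\Phi\|_{\mathrm{op}} \leq \max_{k \geq 2} \varepsilon^{k-1} p^{v_p((k-1)!)}$, which must be made strictly less than $1$ to apply the ultrametric Banach fixed-point theorem. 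The resulting fixed point $w'$ satisfies $\cY(g+w') = \cY(g) + \ud_g\cY(v)$ and lies in $B_\varepsilon$, giving the desired preimage. The main obstacle is the case $p=2$: with $\varepsilon = 1/p = 1/2$ the term $k=2$ gives exactly $\varepsilon \cdot p^{v_2(1!)} \cdot \text{(factor)}$ equal to $1$, so strict contraction can fail; halving to $\varepsilon \leq 1/4$ is precisely what restores a strict contraction bound of $1/2$ and makes the fixed-point argument go through. This is exactly the reason for the dichotomy $\varepsilon \leq 1/p$ vs.\ $\varepsilon \leq 1/4$ in the statement.
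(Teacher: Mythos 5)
Your overall strategy---Taylor expansion, isometry of $\ud_g\cY$, ultrametric bound for the direct inclusion, fixed point for the reverse one---is essentially an inlined version of what the paper delegates to \textcite[Corollary~3.16]{CarRoeVac14}, and the forward inclusion and the isometry argument are fine. But there is a genuine gap in the reverse inclusion: you bound the Lipschitz constant of $\Phi$ by $\sup_{w'}\|\ud_{w'}\Phi\|_{\mathrm{op}}$, i.e.\ you invoke the mean value inequality. That inequality \emph{fails} over $\Qp$ (the standard example is $x\mapsto x^p$ on $\Zp$, whose derivative has norm $p^{-1}$ everywhere but which is not $p^{-1}$-Lipschitz). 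The correct Lipschitz estimate comes from telescoping the multilinear forms: writing $\ud^k_g\cY(w^{\otimes k})-\ud^k_g\cY(w''^{\otimes k})$ as an ultrametric maximum of $k$ terms each of norm at most $\varepsilon^{k-1}\|w-w''\|_E$, one gets the constant $\max_{k\geq 2} |1/k!|\,\varepsilon^{k-1} = \max_{k\geq 2} p^{v_p(k!)}\varepsilon^{k-1}$, with $v_p(k!)$ and not $v_p((k-1)!)$ --- the factor $k$ you gained by differentiating the $k$-th term cannot be recovered once $w\neq w''$, and for $p\mid k$ it costs you.

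This is not merely cosmetic: your constant $\max_k p^{v_p((k-1)!)}\varepsilon^{k-1}$ equals $\max_k 2^{-s_2(k-1)}\le \tfrac12<1$ at $p=2$, $\varepsilon=\tfrac12$, so your argument as written would prove the proposition in a case the statement deliberately excludes; this is why your explanation of the dichotomy has to appeal to an unspecified ``(factor)'' (and $\varepsilon\cdot p^{v_2(1!)}=\tfrac12$, not $1$). With the corrected constant the term $k=2$ gives $p^{v_p(2!)}\varepsilon = 2\varepsilon = 1$ when $p=2$ and $\varepsilon=\tfrac12$, which is exactly where strict contraction fails and exactly the threshold $\inf_{k\geq2}p^{-v_p(k!)/(k-1)}$ that the paper computes when applying Corollary~3.16. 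Once you replace your derivative bound by the telescoped one, the Legendre-formula verification goes through for $\varepsilon\leq\frac1p$ ($p\geq3$) and $\varepsilon\leq\frac14$ ($p=2$), and your proof becomes a correct, self-contained version of the paper's.
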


\begin{proof}
  We apply the result of \textcite[Corollary~3.16]{CarRoeVac14}.  Using their
  notations, we can use $C=1$ because the closed ball of radius~$1$ in~$F$
  (that is the set of elements with integer coefficients) is included
  in~$\ud_g\cY(B_1)$: Indeed for any~$u\in F$ we have~$\ud_g \cY\left( ( u / \cY(g) )' \right) = u$,
  and if~$\|u\|_F \leq 1$ then
  $ \| ( u / \cY(g) )' \|_E = \| ( u / \cY(g) ) \|_F \leq 1$,
  because~$u$ and~$\cY(g)$ have integer coefficients.

  For~$k\geq 2$, let~$M_k$ denote~$\| \frac{1}{k!} \ud^k_g \cY \|$.
  By Lemma~\ref{lem:normDY}, this is simply~$| \frac{1}{k!}|$.
  Corollary~3.16 (ibid.),  with~$\rho=1$ in their notations,  implies that
  $\cY(g + B_\varepsilon) = \cY(g) + \ud_g\cY(B_\varepsilon)$ as long as~$\varepsilon$ satisfies
  \[ \varepsilon < \exp\left(\inf_{k\geq 2} \frac{-\log M_k}{k-1}  \right) = \inf_{k\geq 2} p^{-\frac{v_p(k!)}{k-1}}. \]
  Let~$A$ denote the right-hand side. %
  Legendre's formula for the \padic valuation of~$k!$ shows that
  $v_p(k!) \leq \frac{k}{p-1}$. Therefore~$A \geq p^{-2/(p-1)}$.
  For~$p\geq 5$, this bound gives~$A > \frac{1}{p}$, which proves the claim.
  For~$p = 3$, we have
  \[ A \geq \min\left( 3^{-v_3(2!)}, \inf_{k\geq 3} 3^{-\frac{v_3(k!)}{k-1}} \right) \geq \min\left(1, 3^{-\frac34}\right) > 3^{-1}, \]
  and for~$p=2$, we have
  \[ A \geq \min\left( 2^{-v_2(2!)}, \inf_{k\geq 3} 2^{-\frac{v_2(k!)}{k-1}} \right) \geq \min\left(2^{-1}, 2^{-\frac32}\right) > 2^{-2}, \]
  which concludes the proof.
\end{proof}

\begin{proof}
  [Proof of Proposition~\ref{prop:Ypertub}]
  Let~$\varepsilon = p^{-\kappa}$.
  The norm of an element~$v\in F$ is given by~$p^{-\lambda}$
  where~$\lambda$ is the largest integer such that~$v = 0\pmod{p^\lambda}$.
  Since~$h(\cY(g))$ is invertible modulo~$(p^{\lambda}, t^{n+1})$, for any~$\lambda>0$,
  this shows that~$\|v\|_F = \|h(\cY(g)) v\|_F$ for any~$v\in F$.
  Therefore, with Lemma~\ref{lem:diffY} and the definition of the norms,
  \[ \ud_g\cY(B_\varepsilon) = \left\{ v \in F \st v = 0 \mod p^\kappa \right\}. \]
  Moreover, $B_\varepsilon = \left\{ u \in E \st \smallint u = 0 \mod{p^\kappa} \right\}$,
  and Proposition~\ref{prop:Ypertub} now appears as a rewording of Proposition~\ref{prop:CRV}.
\end{proof}

\section{Experiments}
\label{sec:implementation-equa-diff}

\begin{figure}[t]
  \begin{center}
    \includegraphics[width=\linewidth]{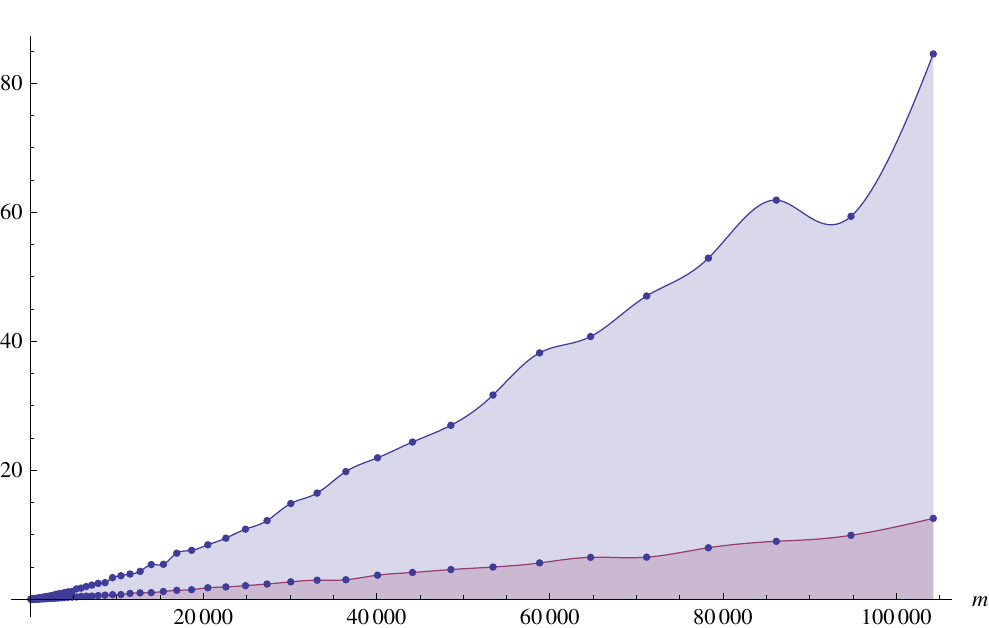}
  \end{center}
  \caption{Timings in seconds, measured on a laptop, of Algorithm~\ref{algo:newton} run at precision~$\lambda_\text{old}$ (upper curve) and~$\lambda_\text{new}$ (lower curve) in order to compute an approximation modulo~$(5,t^{4m+1})$ of the solution of Equation~\eqref{eqn:deqiso}.}
  \label{fig:timings}
\end{figure}

\begin{figure}[t]
  \begin{center}
    \includegraphics[width=\linewidth]{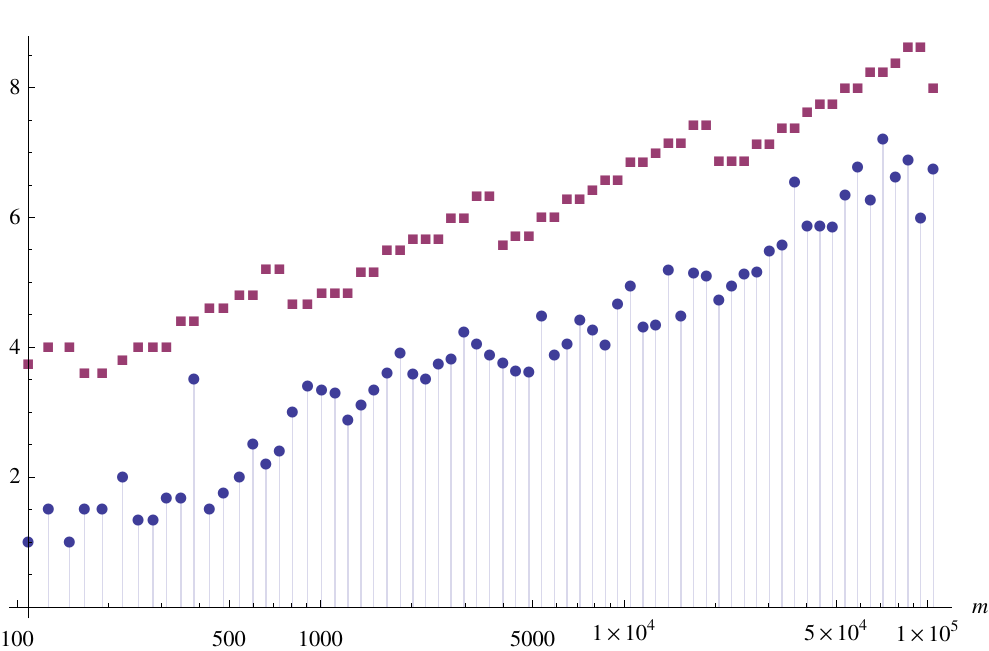}
  \end{center}
  \caption{ {Practical speedup obtained with the new precision analysis compared with the theoretical improvement ($m$-axis in logarithmic scale).\\
    ({\tiny $\blacksquare$}) ratio~$\lambda_\text{old}/\lambda_\text{new}$; ($\bullet$) actual speedup.
  } }
  \label{fig:speedup}
\end{figure}

Let us consider the differential equation
\begin{equation}
 y' = \sqrt{\frac{1 + \frac{1}{4} m^2 y^2 + m^6y^6}{1+\frac{1}{4}t^2 + t^6}}, \quad y(0)=0,  
  \label{eqn:deqiso}
\end{equation}
inspired from algorithms for computing isogenies
\parencite{Bostan:2008,LerSir08}.  Using an implementation in Magma \parencite{magma} of
Algorithm~\ref{algo:newton}, we computed the power series expansion of~$y
\pmod{5, t^{4m+1}}$ for several~$m$.  We compared (Figure~\ref{fig:timings}) the CPU time spent on the computation when using on
the one hand the precision~$\lambda_\text{new} = 1+\lfloor \log_5(4 m) \rfloor$,
following Theorem~\ref{thm:correctness}, and using on the other hand the
precision~$\lambda_\text{old} = 1+\mu(4m) = \cO( \log(m)^2 )$ found by a straightforward precision analysis --- see the discussion in~\S\ref{sec:algo} for the definition of~$\mu$.
For example, with~$m=104281$, we compute~$\lambda_\text{old}=72$ and~$\lambda_\text{new}=9$.
The number of arithmetic operations performed does not depend on the precision~$\lambda$, only on~$m$, but the number of bit operations does since the base ring for the computation is~$\Z/5^\lambda\Z$.
Thus, the expected speedup is~$\lambda_{\text{old}}/\lambda_\text{new}$, which is close to what we observed (Figure~\ref{fig:speedup}).
The implementation is available at
{\center\url{https://gist.github.com/lairez/d648b0d7b5392d0fef74}.

}

\printbibliography

\end{document}